\newcommand*{\hquad}{\hspace{0.5em}}
\DeclareMathOperator*{\argmin}{arg\,min}
\DeclareSIUnit \var { VAr }
\newtheorem{theorem}{Theorem}
\newtheorem{lemma}{Lemma} 
\renewcommand\cite[1]{\parencite{#1}} 
\title{Safe Control of Grid-Interfacing Inverters with Current Magnitude Limits}
\author{Trager Joswig-Jones \\
 University of Washington \\
 {\underline{joswitra@uw.edu}} \\ \And
 Baosen Zhang \\
 University of Washington \\
 {\underline{zhangbao@uw.edu} } \\ 
}
\date{}
\begin{document}
\maketitle
\begin{abstract} 


Grid-interfacing inverters allow renewable resources to be connected to the electric grid and offer fast and programmable control responses. However, inverters are subject to significant physical constraints. One such constraint is a current magnitude limit required to protect semiconductor devices. While many current limiting methods are available, they can often unpredictably alter the behavior of the inverter control during overcurrent events leading to instability or poor performance.

In this paper, we present a safety filter approach to limit the current magnitude of inverters controlled as voltage sources. The safety filter problem is formulated with a control barrier function constraint that encodes the current magnitude limit. To ensure feasibility of the problem, we prove the existence of a safe linear controller for a specified reference. 
This approach allows for the desired voltage source behavior to be minimally altered while safely limiting the current output.

\end{abstract}

\subsubsection*{Keywords:}

current limits, grid-interfacing inverter, stability, safety-critical control, control barrier function


\section{Introduction}  
Electrical power systems are becoming more reliant on renewable energy resources connected to the grid through power electronic devices~\cite{Kroposki_Johnson_Zhang_Gevorgian_Denholm_Hodge_Hannegan_2017}. 
Historically, power electronic inverters were solely designed to convert power from resources, such as wind, solar, and battery energy storage, to be compatible with the AC power system. Now, as these inverter-based renewable energy resources gain higher levels of participation in the grid it is still an open question as to how inverters should be controlled to best support the stability of an AC power system~\cite{guo2019performance, matevosyan2019grid}.

Many inverter control methods have been proposed in literature with the aim of supporting the stability of the grid. One group of methods that have recently gained traction are called grid-forming controllers, which includes droop~\cite{Chandorkar_Divan_Adapa_1993, schiffer2014conditions}, virtual synchronous machine~\cite{driesen2008virtual}, and virtual oscillator control~\cite{dhople2013virtual}. 
Other approaches include training neural-network based controllers~\cite{cui2022reinforcement}, and shaping the frequency response of the system through inverter control~\cite{jiang2021grid, jiang2020dynamic}.

However, many of these approaches assume the inverter model to be ideal (i.e. the device can achieve arbitrary voltage and current outputs) and do not model physical limitations on the devices that are important in practice. 
In reality, controlling an inverters connected to the grid requires a trade-off between system stability and self-protection~\cite{Hart_Gong_Liu_Chen_Zhang_Wang_2022}. 
One of these limitations needed for self-protection is the amount of current that can pass through the switches of inverter devices before they are damaged. 
A common strategy to adapt these approaches for practical implementation is to augment the inverter controller with a current limiting module. While many current limiting methods for grid-interfacing inverters have been proposed, there are still open issues related to how to limit the current when controlling an inverter as a voltage source (as is the case with grid-forming control)~\cite{Fan_Liu_Zhao_Wu_Wang_2022}.

A current limiter saturates the reference of a closed-loop current controller to avoid exceeding the allowed current magnitude. This approach, although simple, changes the behavior of an inverter, and makes its stability and voltage support capability depend on the exact operating conditions~\cite{Xing_Min_Chen_Mao_2021, joswigjones2024optimal}. To avoid these qualitative behavioral changes, voltage limiters have been proposed to directly limit the voltage difference across the inverter filter, such that the current would not exceed its magnitude limits~\cite{VoltageLimit_Bloemink2012, VoltageLimit_Zhou2021}. Virtual impedance methods can also be used to avoid changes in voltage leading to large changes in current that exceed the current magnitude constraint~\cite{VirtualImpedance_Paquette_2015, AdaptiveVirtualImpedance_Wu_2022}. However, these methods require careful design based on detailed knowledge of parameters or system measurements. In addition, they invariably involve a tradeoff between current limiting, and the performance and stability of the inverter. 


To optimize the tradeoff between satisfying current limits and performance, a number of techniques have been proposed. A common approach is to design a ``good'' controller by looking at a system without consideration for the current constraint, then modify its control action as little as possible when current limits are encountered. 
Some research has been presented for virtual impedance methods indicating that the magnitude of the change of the inverter terminal voltage is minimized when the phase angle of the virtual impedance equals that of the passive impedance composed by filter impedance and grid impedance~\cite{Wu_Wang_Zhao_2024}.
This provides a method for selecting the virtual impedance parameters, but requires knowledge of the passive impedance which can change under different operating scenarios and be challenging to obtain exactly.
Another relevant approach is the safety filter framework presented in \cite{Schneeberger_Mastellone_Dörfler_2024}, which is demonstrated with a model of an inverter connected to an infinite bus. This framework can design safety filters (functions that modify control actions) by solving a sum of squares optimization problem and implement the resulting safety filter as a quadratically constrained quadratic program. However, finding these guarantees is computationally difficult and would have to be computed separately for differing inverter systems.
Next, projected-droop control, a control method that explicitly considers the current limits of an inverter, is proposed in ~\cite{Groß_Dorfler_2019}. This approach is designed specifically for the dynamics of a droop controller and is not generally applicable to other control methods. \cite{Groß_Lyu_2023} presents a similar approach that uses primal-dual gradient dynamics to develop a feedback controller for a generic grid-forming control problem and considers a range of constraints. However, this approach does not guarantee constraint satisfaction for all times as the model does not capture the circuit dynamics of the system.

In this work we demonstrate a control barrier function based safety filter method for current limiting of inverter-based resources controlled as a voltage source. The safety filter optimization problem is formulated with a control barrier function constraint that encodes the current magnitude limit. The feasibility of this optimization problem is guaranteed by the existence of a safe linear feedback controller for a given feasible reference. To guarantee the existance of a safe linear controller we use the dynamic properties of an inverter connected to a voltage source through an \emph{RL} branch. This method aims to minimally alter the nominal control action to try to preserve the performance of a nominal controller, while safely limiting the current output at all times. We provide a simple closed-form solution to the safety filter problem that can be applied to a nominal control action resulting in a safe controller. Simulations are performed comparing an unsafe nominal controller to the safety filter controller and a safe linear feedback controller.


The rest of the paper is structured as follows: In Section \ref{section:model}, the system model is introduced and the safety and stability of this model are discussed. Section \ref{section:control} introduces the current magnitude safety filter formulation, which is then shown to always be feasible using proofs provided in Section \ref{sec:proof}. Numerical simulations are presented in Section \ref{section:results} to demonstrate the ability of the safety filter to preserve the performance of a nominal controller while limiting the current magnitude. Lastly, Section \ref{section:conclusion} concludes the paper.



\section{System Model and Control Objective}
\label{section:model}
\subsection{Inverter Model} 
In this paper, we consider a three-phase inverter connected to an infinite bus via an \emph{RL} branch. The model assumes we have balanced three phase and can use a direct-quadrature (dq) reference frame with reference to 
some rotating angle to describe all the rotating physical quantities~\cite{yazdani_vsc_book}. This model is commonly used when studying the transient stability of inverters controlled as a voltage source.

\begin{figure}[ht]
    \centering
    \includegraphics[width=0.9\columnwidth, trim={0 0 0 0},clip]{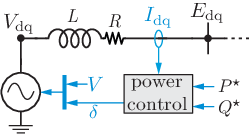}
    \caption{Simplified inverter system model under study.}
    \label{fig:simplified_model}
    \vspace*{-5pt}  
\end{figure}

The dynamical system model of a voltage-source inverter connected to a stiff grid through an RL filter is
\begin{equation} \label{eqn:nonlinear-system}
\begin{bmatrix}\dot{I_{\mathrm{d}}}\\\dot{I_{\mathrm{q}}}\end{bmatrix}=  \begin{bmatrix}- \frac{R}{L} & \omega\\- \omega & - \frac{R}{L}\end{bmatrix} 
\begin{bmatrix}I_{\mathrm{d}}\\I_{\mathrm{q}}\end{bmatrix}
+ \frac{1}{L} \left(\begin{bmatrix} 
V \cos(\delta) \\
V \sin(\delta)
\end{bmatrix} 
- E_\mathrm{dq} \right),
\end{equation}
where 
$R$ and $L$ are the resistance and impedance of the \emph{RL} filter, respectively, $E_\mathrm{dq}$ is the grid-side voltage with reference to some angle $\theta$, $\omega$ is the rotational frequency of the reference frame, $V$ is the inverter voltage magnitude, and $\delta$ is the angle difference between the inverter voltage and $\theta$. 
Making a small-angle assumption, such that $\cos(\delta) \approx 1$ and $\sin(\delta) \approx \delta$, assuming the reference frame rotates near to a nominal value $\omega_\mathrm{nom}$, and assuming $E_\mathrm{dq} = (E, 0)$, where $E$ is the grid-side voltage magnitude, this reduces to the linear system \cite{joswigjones2024optimal, Pogaku_Prodanovic_Green_2007, Zhou_Liu_Zhou_She_2015} 
\begin{equation} \label{eqn:linear-system}
\dot{x}= A x + B u, 
\end{equation}
where 
\begin{align*}
     A = \begin{bmatrix}- \frac{R}{L} & \omega_\mathrm{nom}\\ -\omega_\mathrm{nom} & - \frac{R}{L}\end{bmatrix}, B = \begin{bmatrix}0 \\ \frac{V}{L}\end{bmatrix},
     x = \begin{bmatrix}I_{\mathrm{d}}\\I_{\mathrm{q}}\end{bmatrix}, \;\; u = \delta.
\end{align*} 
For this work, we set $V = E$ and take $\delta$ to be the only control input to the system. The active and reactive power outputs from the inverter are given by 
\begin{align*}
P &= \frac{3}{2} \left( V \cos{(\delta)} I_\mathrm{d} + V \sin{(\delta)} I_\mathrm{q}  \right), \\ 
Q &= \frac{3}{2} \left( V \sin{(\delta)} I_\mathrm{d} - V \cos{(\delta)} I_\mathrm{q} \right).
\end{align*}
With the the above notation and assumptions 
, tracking a given $P^*,Q^*$ is equivalent to tracking some $I_\mathrm{d}^*$ and $I_\mathrm{q}^*$.

\subsection{Safety and Stability} 

To protect internal switching devices, the inverter's output current cannot exceed a given limit~\cite{Fan_Liu_Zhao_Wu_Wang_2022}.
This leads to a bound on the magnitude of the inverter's output current in the dq reference frame, $\left| I 
\right|_\mathrm{max}$. To satisfy this bound we must always have $\|x\|_2 \leq \left| I 
\right|_\mathrm{max}$. We define a set of safe states in terms of the currents magnitude bounds as $\mathcal{S} \coloneqq \{x \hquad|\hquad h(x) \geq 0\}$, where 
$$
h(x) = \left| I 
\right|_\mathrm{max}^2 - \|x\|_2^2,
$$ 
and the set of states at the boundary of this set as $\partial \mathcal{S} \coloneqq \{x \hquad|\hquad h(x) = 0\}$. We want the inverter to be able to operate near or at this bound to maximize the use of the inverter's capabilities especially during faults or transients when the grid requires support. 

A control action is safe if it pushes the states such that they do not leave the safe set anytime they are at the boundary of the safe set.  Mathematically, the set of safe control inputs for some $x \in \mathcal{S}$, with dynamics $\dot{x} = f(x) + g(x) \cdot u$, can be defined as the values of $u$ that satisfy the control barrier function (CBF) constraint 
\begin{equation}
\label{eq:cbf-inequality}
\dot{h}(x) = \nabla h(x)^\top (f(x) + g(x) \cdot u) \geq -\alpha (h(x)) \hquad\forall\hquad x \in \mathcal{S},
\end{equation} where $\nabla h(x)$ is the gradient of $h(x)$ with respect to $x$, and $\alpha$ is a class $\mathcal{K}$ function (strictly increasing and $\alpha(0) = 0$)~\cite{Ames_Xu_Grizzle_Tabuada_2017}. The inclusion of the term $\alpha(h(x))$ allows the safety condition to be applied to the entire safe region instead of just at its boundary. We note that for our system $f(x) = A x$ and $g(x) = B$.

To control the power output of the inverter to a desired value, we need the system to be able to track to a given set point, $x^*$. Formally, we require the control to be asymptotically stable with respect to some control Lyapunov function (CLF) \cite{doi:10.1137/0321028, ARTSTEIN19831163}. Given a CLF, $V(x)$, the set of stabilizing control inputs for some $x \in \mathcal{S}$ can be defined as the values of $u$ satisfying the constraint 
\begin{equation}
\label{eq:clf-inequality}
\dot{V}(x) = \nabla V(x)^\top (f(x) + g(x) \cdot u) \leq 0.
\end{equation}

We want to control our system such that $\mathcal{S}$ is an invariant set and the control is stabilizing (i.e. given an initial condition $x_0 \in \mathcal{S}$ the system will converge to a feasible $x^*$ without the states leaving $\mathcal{S}$ at any point during their trajectory). Typically, a nominal voltage controller $u_\mathrm{nom}(x)$ that is stabilizing will not satisfy the safety constraint for references near the current magnitude boundary and will produce unsafe control actions.





\section{Current Limiting Safety Filter}
\label{section:control}
We assume that the reference point is feasible: $\|x^*\|_2 \leq \left| I 
\right|_\mathrm{max}$, and $u^*, x^*$ satisfy 
\begin{equation}
\label{eq:feasible-reference}
A x^* + B u^* = 0. 
\end{equation}
Our goal is to design a feedback controller such that the system in \eqref{eqn:linear-system} is stable with respect to a given setpoint $x^*$ ($x(t) \rightarrow x^*$ as $t \rightarrow \infty$), and is safe with respect to the current magnitude limit. 

\subsection{Safety Filter Formulation}
A nonlinear control method often used to achieve safety guarantees with a well performing controller is a \textit{safety filter}. A safety filter is a function that is applied to some control action that can detect unsafe control inputs that may lead to constraint violations and minimally modifies them to ensure safety~\cite{Wabersich_Taylor_Choi_Sreenath_Tomlin_Ames_Zeilinger_2023}. In this section we introduce how such a safety filter can be applied to our system and provide guarantees on the feasibility of the safety filter problem constraints.

We can formulate this safety filter as a quadratic program (QP) as follows \cite{Ames_Grizzle_Tabuada_2014}.
\begin{subequations} 
\label{eq:safety-filter}
\begin{align}
\label{eq:safety-filter-problem}
\displaystyle
\bar{u} = \argmin_{u} & \| u - u_\mathrm{nom}(x) \|_2^2  \tag{QP} \\
\textrm{s.t.} \quad &
\nabla h(x)^\top ( f(x) + g(x) \cdot u ) \geq -\alpha h(x) \tag{CBF} \label{eq:cbf-constraint}\\
\quad &
\nabla V(x)^\top ( f(x) + g(x) \cdot u ) \leq 0 \tag{CLF}, \label{eq:clf-constraint}
\end{align}
\end{subequations}
where $\bar{u}$ is the filtered voltage control action, and we select $\alpha$ to be a positive constant. For this paper we use the CLF $V(x) = (x - x^*)^\top (x - x^*)$. 

Although QPs are considered to be simple optimization problems, using iterative solvers in real-time on inverters themselves is not trivial, because of microprocessor limitations and sampling/input delays~\cite{Singletary_Chen_Ames_2020, Buso_Mattavelli_2015}. However, this form of QP has a closed-form solution~\cite{Ames_Xu_Grizzle_Tabuada_2017}. In our case, because the optimization is over a scalar variable $u$, there is actually a simple closed-form solution (see Algorithm~\ref{alg:closed-form-qp}). This algorithm comes from projecting a point into an interval, and can be implemented on existing inverter microcontrollers. 
\begin{algorithm}
\caption{Closed-form solution to \ref{eq:safety-filter-problem} with $u \in \mathbb{R}^1$}\label{alg:closed-form-qp}
\begin{algorithmic}
\Require $x_t$, $u_t$, $a_t^\mathrm{cbf}$, $b_t^\mathrm{cbf}$, $a_t^\mathrm{clf}$, $b_t^\mathrm{clf}$
\Ensure $\bar{u}_t$
\State $a_t^\mathrm{cbf} \gets \nabla h(x_t)^\top g(x_t)$
\State $b_t^\mathrm{cbf} \gets  -\alpha h(x_t) - \nabla h(x_t)^\top f(x_t)$
\State $a_t^\mathrm{clf} \gets \nabla V(x_t)^\top g(x_t)$
\State $b_t^\mathrm{clf} \gets - \nabla V(x_t)^\top f(x_t)$
\State $u_{lb} \gets -\infty$
\State $u_{ub} \gets \infty$
\If{$a_t^\mathrm{cbf} u_t \geq b_t^\mathrm{cbf}$ and $a_t^\mathrm{clf} u_t \leq b_t^\mathrm{clf}$}
    \State $\bar{u}_t \gets u_t$
\Else{}

    \If{$a_t^\mathrm{cbf} \geq 0$}
        \State ${u_{lb}} \gets \max{(u_{lb}, b_t^\mathrm{cbf} / a_t^\mathrm{cbf})}$
    \Else{}
        \State ${u_{ub}} \gets \min{(u_{ub}, b_t^\mathrm{cbf} / a_t^\mathrm{cbf})}$
    \EndIf
    \If{$a_t^\mathrm{clf} \geq 0$}
        \State ${u_{ub}} \gets \min{(u_{ub}, b_t^\mathrm{clf} / a_t^\mathrm{clf})}$
    \Else{}
        \State ${u_{lb}} \gets \max{(u_{lb}, b_t^\mathrm{clf} / a_t^\mathrm{clf})}$
    \EndIf

    \State $\bar{u}_t \gets \min{(u_{ub}, \max{(u_t, u_{lb})})}$
\EndIf
\end{algorithmic}
\end{algorithm}
\vspace*{-10pt}  

When using an optimization-based controller it is important to ensure that there exists at least one feasible solution at any possible state of the system. 
This is especially relevant to our problem where the safety and stability constraints often can seem to be working towards opposite goals 
(we want a controller that can track current references near the current magnitude limit, but never exceeds this current magnitude limit).
In general, there is no guarantee that there is a feasible input to this safety filter problem for a generic linear system, but given the properties of our system and safety constraint we can guarantee the existence of a feasible solution.
In the following section we prove that there always exists a feasible control action that satisfies (\ref{eq:cbf-constraint}) and (\ref{eq:clf-constraint}) for (\ref{eqn:linear-system}).


\subsection{Safety Filter Feasibility}
The theorem below states the main result of the paper.
\begin{theorem} \label{thm:main}
Given the linear system in \eqref{eqn:linear-system} and assume that the setpoints $x^*$ and $u^*$ are feasible. Consider the magnitude barrier function $h(x) = \left| I 
\right|_\mathrm{max}^2 - \|x\|_2^2$. 
If $A + A^\top \prec 0$, and $A^{-1} B \ne 0$,
then a safe and stable control actions always exists such that if $h(x_0) \geq 0$ then $h(x_t) \geq 0$ and $x_t \to x^*$.
\end{theorem}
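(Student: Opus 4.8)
The plan is to prove that the two constraints \eqref{eq:cbf-constraint} and \eqref{eq:clf-constraint} can always be satisfied simultaneously; since \ref{eq:safety-filter-problem} minimizes a strictly convex objective over this set, nonemptiness of the feasible set is exactly what guarantees a minimizer. Because $u$ is scalar and $g(x)=B$ acts only on the $I_\mathrm{q}$ row, each constraint is a single affine half-line in $u$ (equivalently in $\dot I_\mathrm{q}$), so the whole question collapses to showing that two half-lines in $\mathbb{R}$ always overlap. I would begin by substituting $\nabla h(x) = -2x$ and $\nabla V(x) = 2(x-x^*)$ and simplifying with $x^\top A x = \tfrac12 x^\top(A+A^\top)x = -\tfrac{R}{L}\|x\|_2^2$, which follows from $A+A^\top \prec 0$. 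This makes the $u$-coefficients proportional to $I_\mathrm{q}$ in \eqref{eq:cbf-constraint} and to $I_\mathrm{q}-I_\mathrm{q}^*$ in \eqref{eq:clf-constraint}, and writes out the constant terms $b_t^\mathrm{cbf}, b_t^\mathrm{clf}$ explicitly.

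Next I would dispose of the easy configurations. Since $-\tfrac{R}{L}\|x\|_2^2 \le 0$ and $\alpha h(x)\ge 0$ on $\mathcal S$, the CBF constraint alone is always satisfiable; and evaluating \eqref{eq:clf-constraint} at the constant input $u=u^*$ gives $\dot V = (x-x^*)^\top(A+A^\top)(x-x^*) = -\tfrac{2R}{L}\|x-x^*\|_2^2 \le 0$, so the CLF half-line is always nonempty and its decrease is strict away from $x^*$. Two half-lines in $\mathbb{R}$ fail to intersect only when their $u$-coefficients carry opposite signs, i.e. only when $I_\mathrm{q}(I_\mathrm{q}-I_\mathrm{q}^*)<0$, so the entire difficulty is confined to states whose $I_\mathrm{q}$ lies strictly between $0$ and $I_\mathrm{q}^*$.

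The hard part is precisely this conflicting regime, and I expect the feasibility of the reference to be the decisive ingredient. From $Ax^*+Bu^*=0$ we get $x^* = -A^{-1}B\,u^*$, so every admissible setpoint is parallel to the fixed direction $A^{-1}B$, which for this system is proportional to $(\omega_\mathrm{nom},\,R/L)^\top$; the hypothesis $A^{-1}B\neq 0$ is what makes this direction well defined and nondegenerate. I would use this collinearity to rewrite the cross term $I_\mathrm{d}^*I_\mathrm{q}-I_\mathrm{d}I_\mathrm{q}^*$ that controls the gap between the CBF bound and the CLF bound, showing it equals a scalar multiple of $\dot I_\mathrm{d}=\omega_\mathrm{nom}I_\mathrm{q}-\tfrac{R}{L}I_\mathrm{d}$ whose sign matches that of $I_\mathrm{q}^*$ (equivalently of $u^*$). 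Comparing the two bounds then reduces the overlap condition to requiring $\operatorname{sign}(I_\mathrm{q}^*)\,\dot I_\mathrm{d}^{\,2}$ to point the way dictated by the subcase, which is automatic because both are governed by the sign of $I_\mathrm{q}^*$; the slack $\tfrac12\alpha h(x)\ge 0$ only makes interior states strictly easier than boundary states. Establishing this identity and checking both sign subcases is the technical crux.

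Finally I would assemble the pieces. Feasibility of \ref{eq:safety-filter-problem} at every $x\in\mathcal S$ supplies a control satisfying both constraints pointwise, so the forward-invariance theorem of \cite{Ames_Xu_Grizzle_Tabuada_2017} guarantees $h(x_t)\ge 0$ whenever $h(x_0)\ge 0$, while the enforced $\dot V \le 0$ — strict off $x^*$, as witnessed by $u=u^*$ — drives $x_t \to x^*$ by a standard Lyapunov/LaSalle argument. The one place I would be most careful is ensuring the overlap argument is uniform over the entire boundary sphere and not just a representative state, since the conflicting regime must be resolved consistently under the reference geometry for the invariance conclusion to hold for all initial conditions in $\mathcal S$.
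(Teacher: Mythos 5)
Your proposal is correct in substance, but it takes a genuinely different route from the paper. The paper never reasons pointwise about the two constraints: it proves feasibility of \eqref{eq:safety-filter-problem} by exhibiting one safe and stable linear law $u = u^* - K(x-x^*)$, where Lemma~\ref{lemma:existance} constructs $K$ so that $N = A - BK$ has $x^*$ as a left eigenvector with eigenvalue $\lambda$, $\lambda_\mathrm{max}(N+N^\top) \le \lambda$ and $N + N^\top \preceq 0$, and Lemma~\ref{lemma:safety} shows any such closed-loop matrix satisfies the barrier inequality on $\partial\mathcal{S}$; this single controller is then a feasible point of the QP at every state and witnesses $x_t \to x^*$ directly. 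You instead exploit the scalar input and the explicit structure of \eqref{eqn:linear-system}: the constraints \eqref{eq:cbf-constraint} and \eqref{eq:clf-constraint} are half-lines in $u$, they can conflict only when $I_\mathrm{q}$ lies strictly between $0$ and $I_\mathrm{q}^*$, and in that regime the collinearity $x^* = -A^{-1}Bu^*$ resolves the conflict. I checked your sketched crux and it does go through: writing $a = R/L$, $w = \omega_\mathrm{nom}$, $b = V/L$, the two constraints are $2 b I_\mathrm{q} u \le \alpha h(x) + 2a\|x\|_2^2$ and $b(I_\mathrm{q} - I_\mathrm{q}^*)(u - u^*) \le a\|x - x^*\|_2^2$; clearing the sign-definite denominators and substituting $I_\mathrm{d}^* = (w/a)I_\mathrm{q}^*$ and $b u^* = \tfrac{a^2 + w^2}{a} I_\mathrm{q}^*$ reduces the overlap condition exactly to $I_\mathrm{q}^*\,(a I_\mathrm{d} - w I_\mathrm{q})^2 \ge 0$, i.e.\ your $\mathrm{sign}(I_\mathrm{q}^*)\,\dot{I}_\mathrm{d}^2$ certificate, with $\alpha h(x) \ge 0$ entering only as slack (the opposite sign subcase follows from the symmetry $(x, x^*, u, u^*) \mapsto (-x, -x^*, -u, -u^*)$). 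What each approach buys: yours is more elementary and works directly with the feasible interval that Algorithm~\ref{alg:closed-form-qp} projects onto, but it is welded to this particular $A$ and $B$ (symmetric part of $A$ a multiple of $I$, $B$ along the $I_\mathrm{q}$-axis), whereas the paper's lemmas use only $A + A^\top \prec 0$ and $A^{-1}B \ne 0$ and additionally produce the explicit safe linear gain reused in \eqref{eq:safe-linear-ctrl-problem}. The one soft spot in your plan is the convergence step: the QP enforces only the non-strict $\dot{V} \le 0$ and may return exactly the $\dot{V} = 0$ endpoint of the feasible interval, so ``feasibility plus LaSalle'' is not automatic; you should either argue that a feasible $u$ with strict decrease exists whenever $x \ne x^*$ (in the conflicting regime the interval has nonempty interior, since $h(x) = 0$ and $aI_\mathrm{d} = wI_\mathrm{q}$ cannot hold there simultaneously when $\|x^*\|_2 \le \left| I \right|_\mathrm{max}$), or, as the paper does, fall back on the fixed linear controller to conclude $x_t \to x^*$.
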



Our approach to proving that there always exists a feasible solution to the safety filter problem is to prove the existence of a safe and stable linear feedback controller, $K$, for a given $x^*, u^*$. Because this controller exists, there is at least one feasible action $u$ that satisfies (\ref{eq:cbf-constraint}) and (\ref{eq:clf-constraint}), namely, the linear control action, $u = u^* - K (x - x^*)$. The main condition on the system is that the $A$ matrix is "stable" enough, namely $A+A^T \prec 0$ (this can be seen as the Lyapunov stability condition for linear systems with a Lyapunov function of $||x||_2^2$). This result is intuitive in the following sense. Because the level sets of the Lyapunov function and the barrier function have the same shape---they are both circular---the tension between stability and safety can be resolved. The formal proof of this theorem is nontrivial, and the details are given in Section~\ref{sec:proof}.

Even though a safe and stable linear controller exists, there may be more optimal control actions that are safe and stabilizing. 
Imposing a safe linear controller over the entire safe region is overly conservative and can result in a suboptimal performance.
Therefore, we are motivated to look at the nonlinear safety filter controller when the control objective includes safety constraints and performance specifications. 
The ability of this safety filter to provide more optimal safe control actions compared to a safe linear controller can be seen in Section~\ref{section:results}.



\section{Proof of Theorem~\ref{thm:main}}
\label{sec:proof}

We have two major lemmas that establishes the existence of a safe and stable $K$. 

\begin{lemma}[Conditions for a safe \& stable system]
\label{lemma:safety}
Given $M \in \mathbb{R}^{p \times p}$, $y \in \mathbb{R}^{p}$, and $z \in \mathbb{R}^{p}$,
where $\|y\|_2 \leq \|z\|_2$,
if $M^\top y = \lambda y$, $\lambda_\mathrm{min}(M + M^\top) \geq \lambda$, and $M + M^\top \succeq 0$,
then
$$
z^\top M z - z^\top M y \geq 0.
$$
\end{lemma}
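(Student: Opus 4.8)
The plan is to reduce the claim to a statement about the symmetric part of $M$ and then complete a square. Write $S = \tfrac12(M + M^\top)$ for the symmetric part. Since any skew-symmetric contribution vanishes in a quadratic form, $z^\top M z = z^\top S z$, so the quantity of interest is really $z^\top S z - z^\top M y$.

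First I would record two consequences of the eigenvector hypothesis $M^\top y = \lambda y$. Because a scalar equals its transpose, $y^\top M y = y^\top M^\top y = \lambda \|y\|_2^2$, and hence $y^\top S y = \lambda\|y\|_2^2$. Combined with $M + M^\top \succeq 0$ (so $S \succeq 0$), this forces $\lambda \geq 0$ whenever $y \neq 0$, while the case $y = 0$ is immediate. I would also rewrite the cross term via $2 z^\top S y = z^\top M y + z^\top M^\top y = z^\top M y + \lambda z^\top y$, which expresses $z^\top M y$ purely through $S$ and the scalar $\lambda z^\top y$.

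The key step is an exact identity. Setting $w = z - y$, I would verify that
\[
z^\top M z - z^\top M y = w^\top S w + \lambda\, w^\top y,
\]
by expanding $w^\top S w = z^\top S z - 2 z^\top S y + \lambda\|y\|_2^2$ and substituting the cross-term rewrite together with $y^\top S y = \lambda\|y\|_2^2$. With this identity in hand, I would bound $w^\top S w \geq \lambda_\mathrm{min}(S)\|w\|_2^2 \geq \tfrac{\lambda}{2}\|w\|_2^2$ using the Rayleigh quotient and the hypothesis $\lambda_\mathrm{min}(M+M^\top) \geq \lambda$ (equivalently $\lambda_\mathrm{min}(S) \geq \lambda/2$). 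Substituting this bound and completing the square then gives $z^\top M z - z^\top M y \geq \tfrac\lambda2\|w\|_2^2 + \lambda w^\top y = \tfrac\lambda2\bigl(\|w + y\|_2^2 - \|y\|_2^2\bigr) = \tfrac\lambda2\bigl(\|z\|_2^2 - \|y\|_2^2\bigr)$, which is nonnegative because $\lambda \geq 0$ and $\|y\|_2 \leq \|z\|_2$.

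The main obstacle I anticipate is discovering the regrouping that produces the identity above: the expression must be reorganized so that exactly the combination $\|z\|_2^2 - \|y\|_2^2$ emerges, which is the only place the norm hypothesis $\|y\|_2 \le \|z\|_2$ is used. The remaining care is bookkeeping—confirming $\lambda \geq 0$ from the positive-semidefiniteness assumption (rather than from the eigenvalue bound, which only controls $\lambda$ from above) and dispatching the degenerate case $y = 0$ separately.
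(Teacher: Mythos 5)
Your proof is correct; I verified the key identity $z^\top M z - z^\top M y = w^\top S w + \lambda\, w^\top y$ with $w = z - y$, $S = \tfrac12(M+M^\top)$, and every subsequent step. It follows the same basic mechanism as the paper's proof---pass to the symmetric part, use the left-eigenvector relation to turn $y$-terms into multiples of $\lambda$, invoke $\lambda_\mathrm{min}(M+M^\top)\geq\lambda$ via the Rayleigh quotient, and close with $\|z\|_2^2 - \|y\|_2^2 \geq 0$---but the execution differs in ways that matter. The paper expands $z-y$ in an orthonormal eigenbasis $\{v_1, v_2\}$ of $\tfrac{M+M^\top}{2}$ and finishes with a separate coordinate computation (its Lemma~3); this implicitly fixes $p=2$, which suffices for the two-state inverter application but does not prove the lemma as stated for general $p$. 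Your coordinate-free identity plus completed square works in any dimension. You are also more careful on two edge cases the paper glosses over: the paper's step from $\tfrac{\lambda}{2}(c_1^2+c_2^2) \geq -\lambda(y^\top(c_1 v_1 + c_2 v_2))$ to $c_1^2+c_2^2 \geq -2(c_1 y^\top v_1 + c_2 y^\top v_2)$ silently divides by $\lambda$, so it needs $\lambda > 0$, whereas you derive $\lambda \geq 0$ from $M+M^\top \succeq 0$ and your argument survives $\lambda = 0$ (everything is multiplied by $\lambda/2$, not divided); and you dispatch $y=0$ explicitly. In short: same skeleton, but your version is more general and closes small rigor gaps; the paper's version is more concrete and computational, which is arguably easier to check by hand in the $2\times 2$ case it cares about.
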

\begin{lemma}[Existence of a safe  \& stable $K$]
\label{lemma:existance}
Given $A \in \mathbb{R}^{p \times p}, B \in \mathbb{R}^{p \times 1}$. Given $x^*$ and $u^*$, satisfying $A x^* + B u^* = 0$, 
if $A + A^\top \prec 0$, and $A^{-1} B \ne 0$, 
then there exists a $K \in \mathbb{R}^{1 \times p}$, such that the closed-loop system $N \coloneq A - B K$ has the properties
$N^\top x^* = \lambda x^*$, $\lambda_\mathrm{max}(N + N^\top) \leq \lambda$, and $N + N^\top \preceq 0$.  
\end{lemma}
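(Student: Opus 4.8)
The plan is to construct a single pair $(K,\lambda)$ realizing the three stated properties; since $u=u^*-K(x-x^*)$ is then always an admissible point of the QP, this simultaneously certifies feasibility of the safety filter. Under this feedback, and using $Ax^*+Bu^*=0$, the closed loop is $\dot x = N(x-x^*)$ with $N=A-BK$, so everything reduces to producing such an $N$. If $x^*=0$ the lemma is immediate: $A+A^\top\prec0$ already makes $N=A$ Hurwitz, and $\lambda:=\lambda_{\max}(A+A^\top)<0$ satisfies all conditions. Otherwise $A+A^\top\prec0$ forces $A$ invertible, so feasibility gives $x^*=-u^*A^{-1}B$ with $u^*\neq0$ and $B=-\tfrac1{u^*}Ax^*$; hence $\beta:=B^\top x^*=-\tfrac1{2u^*}x^{*\top}(A+A^\top)x^*\neq0$.

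Next I would encode the eigenvector condition as a parametrization of $K$. Solving $N^\top x^*=A^\top x^*-\beta K^\top=\lambda x^*$ (legitimate since $\beta\neq0$) gives the one-parameter family $K(\lambda)^\top=\tfrac1\beta(A^\top x^*-\lambda x^*)$, for which $N(\lambda)^\top x^*=\lambda x^*$ holds by construction. Its symmetric part $G(\lambda):=N(\lambda)+N(\lambda)^\top$ is affine in $\lambda$: $G(\lambda)=S_0+\tfrac\lambda\beta T$, with $T=Bx^{*\top}+x^*B^\top$ and $S_0=(A+A^\top)-\tfrac1\beta(Bx^{*\top}A+A^\top x^*B^\top)$. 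For $\lambda<0$ the two remaining requirements $G(\lambda)\preceq0$ and $\lambda_{\max}(G(\lambda))\le\lambda$ are jointly equivalent to the single condition $G(\lambda)\preceq\lambda I$, so it suffices to exhibit one $\lambda<0$ with $G(\lambda)\preceq\lambda I$.

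The crux is the structure of $S_0=G(0)$. A direct computation using $Ax^*=-u^*B$ and $x^{*\top}Ax^*=-u^*\beta$ gives $S_0x^*=0$. The key claim --- and the main obstacle --- is that in fact $S_0\preceq0$ with $\ker S_0=\mathrm{span}(x^*)$ exactly. I would prove this by writing $v^\top S_0v=v^\top(A+A^\top)v-\tfrac2\beta(B^\top v)(x^{*\top}Av)$, substituting $B=-\tfrac1{u^*}Ax^*$, and splitting $A$ into $\tfrac12(A+A^\top)$ and its skew part $W=A-A^\top$. Writing $\langle\cdot,\cdot\rangle_{-P}$ for the inner product induced by $-P:=-(A+A^\top)\succ0$, the inequality $v^\top S_0v\le0$ collapses (after clearing a positive factor) to $\langle v,x^*\rangle_{-P}^2-c\,(v^\top W x^*)^2\le\|v\|_{-P}^2\|x^*\|_{-P}^2$ for some $c>0$, which is just Cauchy--Schwarz weakened by a nonnegative term; equality forces both Cauchy--Schwarz equality ($v\parallel x^*$) and $v^\top W x^*=0$, the latter being automatic once $v\parallel x^*$. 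This pins the kernel to $\mathrm{span}(x^*)$, and it is exactly here that $A+A^\top\prec0$ is used.

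Finally I would push the eigenvalue strictly below zero by a perturbation argument. Because $\ker S_0$ is one-dimensional, $\lambda_{\max}(S_0)=0$ is a simple eigenvalue with eigenvector $x^*$, so $\lambda\mapsto\lambda_{\max}(G(\lambda))$ is differentiable at $0$ with derivative $\hat x^{*\top}\tfrac{T}{\beta}\hat x^*=\tfrac{x^{*\top}Tx^*}{\beta\|x^*\|^2}=2$, where $\hat x^*=x^*/\|x^*\|$. Hence $\phi(\lambda):=\lambda_{\max}(G(\lambda))-\lambda$ satisfies $\phi(0)=0$ and $\phi'(0)=2-1=1>0$, so $\phi(\lambda)<0$ for all sufficiently small $\lambda<0$. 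Fixing any such $\lambda$ and setting $K=K(\lambda)$ yields $N^\top x^*=\lambda x^*$, $\lambda_{\max}(N+N^\top)\le\lambda<0$, and $N+N^\top\prec0$; the last inequality also makes $N$ Hurwitz, so $x_t\to x^*$, completing the construction.
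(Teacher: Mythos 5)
Your proof is correct, and its architecture coincides with the paper's: both solve the eigenvector condition $N^\top x^* = \lambda x^*$ for $K$ (legitimate because $B^\top x^* \neq 0$, respectively $B^\top A^{-\top}B \neq 0$, both consequences of $A+A^\top \prec 0$), obtaining a one-parameter family $K(\lambda)$ whose closed-loop symmetric part is affine in $\lambda$, and then take $\lambda < 0$ small. In fact, after substituting $x^* \propto A^{-1}B$, your pencil $G(\lambda) - \lambda I = S_0 + \tfrac{\lambda}{\beta}T - \lambda I$ is exactly the paper's $M_1 - \lambda M_2$, so the two constructions yield the same family of controllers. The genuine differences are in how the two key facts are established. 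First, for $S_0 \preceq 0$ with kernel exactly $\mathrm{span}(x^*)$: the paper exhibits the congruence $M_1 = P^\top (A+A^\top) P$ with $P = I - \frac{A^{-1}BB^\top A^{-\top}A}{B^\top A^{-1}B}$ and argues via rank (identity plus a rank-one matrix has at most a one-dimensional kernel), while you split $A$ into symmetric and skew parts and reduce $v^\top S_0 v \le 0$ to a Cauchy--Schwarz inequality in the inner product induced by $-(A+A^\top)$, with the equality case pinning the kernel; your route makes explicit where strict dissipativity and the skew part enter, the paper's factorization is more compact. Second, for the perturbation step: the paper checks $z^\top M_2 z = -\|A^{-1}B\|_2^2 < 0$ on the kernel direction and invokes a ``sufficiently small $\lambda$'' argument, while you phrase the same fact through analytic perturbation of the simple top eigenvalue, computing $\frac{d}{d\lambda}\lambda_{\mathrm{max}}(G(\lambda))\big|_{\lambda=0} = 2$ so that $\lambda_{\mathrm{max}}(G(\lambda)) - \lambda$ has slope $1$ at the origin and is negative for small $\lambda < 0$; this is a cleaner justification of the paper's smallness claim (note $\hat{x}^{*\top} M_2 \hat{x}^* = 1 - 2 = -1$ matches your derivative computation). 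A minor point in your favor: you handle the degenerate case $x^* = 0$ (which forces $u^* = 0$), whereas the paper's proof sidesteps it by assuming $u^* \neq 0$.
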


We can apply these lemmas to prove \textbf{Theorem}~\ref{thm:main} as follows. Since we have $x^*, u^*$ satisfying (\ref{eq:feasible-reference}), then the closed-loop dynamics of $(\ref{eqn:linear-system})$ with control $u = u^* - K (x - x^*)$ can be equivalently represented as $\dot{x} = -\bar{A} (x - x^*)$, where $\bar{A} = -(A - B K)$. For safety of this equivalent system we require (\ref{eq:cbf-inequality}) to hold for all $x \in \partial \mathcal{S}$, with $f(x) = -\bar{A} (x - x^*), g(x) = 0$. 
This is equivalent to the inequality 
$$x^\top \bar{A} x - x^\top \bar{A} x^* \geq 0 \hquad\forall\hquad x \in \partial\mathcal{S}.
$$
We can show this inequality holds using \textbf{Lemma}~\ref{lemma:safety}, where $M = \bar{A}$, $z = x$ and $y = x^*$, if we can find a $K$ that satisfies the following criteria:
\begin{subequations}
 \label{eq:safety-criteria}
\begin{align} 
    (x^*)^\top (A - B K) &= (x^*)^\top \lambda, \label{eq:safety-criteria-1} \\ 
    \lambda_\mathrm{max}((A - B K) + (A - B K)^\top) &\leq \lambda, \label{eq:safety-criteria-2} \\
    (A - B K) + (A - B K)^\top &\prec 0, \label{eq:safety-criteria-3}
\end{align}
\end{subequations}
where $\lambda \in \mathbb{R}$ is the eigenvalue associated with $x^*$. With \textbf{Lemma}~\ref{lemma:existance}, we can see that for our system a $K$ satisfying (\ref{eq:safety-criteria}) always exists, as the $A$ matrix of (\ref{eqn:linear-system}) meets the SDP condition, $A + A^\top \prec 0$, and the system is controllable. 

Once these two lemmas are established we can conclude that there exists a safe linear feedback controller such that there always exists at least one feasible solution for the safety filter (\ref{eq:safety-filter-problem}) with our system (\ref{eqn:linear-system}).

\subsection{Proof of Lemma 1}

\begin{proof}[Proof of Lemma \ref{lemma:safety}]
We start by noting that $z^\top M z - z^\top M y \geq 0$ is equivalent to $(z - y)^\top M (z - y) \geq y^\top M y - y^\top M z $.
%
%
Since $y$ is a left eigenvector of $M$ with eigenvalue $\lambda$ this inequality is equivalent to
$$
(z - y)^\top M (z - y) \geq \lambda y^\top y - \lambda y^\top z.
$$
Noting that $(z - y)^\top M (z - y)$ = $(z - y)^\top \frac{M + M^\top}{2} (z - y)$, we have
$$
(z - y)^\top \frac{M + M^\top}{2} (z - y) \geq \lambda y^\top y - \lambda y^\top z. 
$$
We define $v_1, v_2$ and $\lambda_1, \lambda_2$ to be the eigenvectors and eigenvalues of $\frac{M + M^\top}{2}$, and note that we can pick $v_1, v_2$ to form an orthonormal basis as $\frac{M + M^\top}{2}$ is real and symmetric. We define $c_1 = (z - y)^\top v_1$, $c_2 = (z - y)^\top v_2$ such that $(z - y) = c_1 v_1 + c_2 v_2$ and our inequality is equal to
$$
\lambda_1 c_1^2 + \lambda_2 c_2^2 \geq \lambda y^\top y - \lambda y^\top z 
$$
Rearranging $(z - y) = c_1 v_1 + c_2 v_2$ we have $z = y + c_1 v_1 + c_2 v_2$ which we substitute in for the remaining $z$ to get
$$
\lambda_1 c_1^2 + \lambda_2 c_2^2 \geq - \lambda (y^\top (c_1 v_1 + c_2 v_2)).
$$
Using our assumption that $\lambda_1, \lambda_2 \geq \frac{\lambda}{2}$ we can see that the left-hand side of the inequality is lower bounded as
$$
\lambda_1 c_1^2 + \lambda_2 c_2^2 \geq \frac{\lambda}{2} (c_1^2 + c_2^2)  \geq - \lambda (y^\top (c_1 v_1 + c_2 v_2)),
$$
which is equivalent to showing that 
$$
c_1^2 + c_2^2 \geq -2 (c_1 y^\top v_1 + c_2 y^\top v_2),
$$ as $\lambda_1 c_1^2, \lambda_2 c_2^2 \geq 0$ ($M + M^\top \succ 0 \implies \lambda_1, \lambda_2 > 0$). This can be shown using \textbf{Lemma}~\ref{lemma:ortho_basis_ineq}, with $\zeta = z, \gamma = y$, and $\{w_1, w_2\} = \{v_1, v_2\}$, as by assumption $\|z\|_2 \geq \|y\|_2$.
\end{proof}

\begin{lemma}
\label{lemma:ortho_basis_ineq}
Given an orthonormal basis $\{w_1, w_2\}$, and $\gamma, \zeta \in \mathbb{R}^{2}$ such that $\|\zeta\|_2 \geq \|\gamma\|_2$,
$$
c_1^2 + c_2^2 \geq -2 (c_1 b + c_2 d),
$$
where $c_1 = a - b$, $c_2 = c - d$, and $a = \zeta^\top w_1, b = \gamma^\top w_1, c = \zeta^\top w_2, d = \gamma^\top w_2$.
\end{lemma}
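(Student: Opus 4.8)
The plan is to reduce the claimed inequality entirely to the norm hypothesis $\|\zeta\|_2 \geq \|\gamma\|_2$ by combining an elementary completion of the square with Parseval's identity for the orthonormal basis. First I would rewrite the target inequality $c_1^2 + c_2^2 \geq -2(c_1 b + c_2 d)$ in the equivalent form $c_1^2 + 2 c_1 b + c_2^2 + 2 c_2 d \geq 0$ and complete the square in each coordinate separately. Because $c_1 = a - b$ we have $c_1 + b = a$, so $c_1^2 + 2 c_1 b = (c_1 + b)^2 - b^2 = a^2 - b^2$; symmetrically, since $c_2 + d = c$, we get $c_2^2 + 2 c_2 d = c^2 - d^2$. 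Thus the left-hand side collapses to the single expression $a^2 - b^2 + c^2 - d^2$, and the whole lemma becomes the assertion $a^2 + c^2 \geq b^2 + d^2$.

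Second, I would identify these two sums with the squared norms of $\zeta$ and $\gamma$. Since $\{w_1, w_2\}$ is an orthonormal basis of $\mathbb{R}^2$, the scalars $a = \zeta^\top w_1$ and $c = \zeta^\top w_2$ are precisely the coordinates of $\zeta$ in this basis, i.e. $\zeta = a w_1 + c w_2$; expanding $\zeta^\top \zeta$ and using $w_i^\top w_j = \delta_{ij}$ gives $\|\zeta\|_2^2 = a^2 + c^2$, and likewise $\|\gamma\|_2^2 = b^2 + d^2$. Therefore $a^2 - b^2 + c^2 - d^2 = \|\zeta\|_2^2 - \|\gamma\|_2^2$, which is nonnegative exactly by the standing hypothesis $\|\zeta\|_2 \geq \|\gamma\|_2$. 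This closes the argument.

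I do not anticipate a genuine obstacle: the statement is an algebraic identity glued to Parseval's identity, so the reduction above is essentially the entire proof. The only point deserving care is that I must use the full orthonormality of $\{w_1, w_2\}$ (unit length as well as orthogonality), as this is what makes the coefficients in $\|\zeta\|_2^2 = a^2 + c^2$ equal to one; I would verify this explicitly by the expansion of $\zeta^\top \zeta$ above. It is also worth noting that the identity holds without any sign or magnitude restriction on the individual quantities $a, b, c, d$ — the inequality depends only on the aggregate norm comparison — which is precisely why this lemma suffices to discharge the final step of the proof of Lemma~\ref{lemma:safety}.
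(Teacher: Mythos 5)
Your proposal is correct and is essentially the paper's own argument: both reduce the inequality, via the substitutions $c_1 = a-b$, $c_2 = c-d$ and the orthonormal-expansion identities $\|\zeta\|_2^2 = a^2+c^2$, $\|\gamma\|_2^2 = b^2+d^2$, to the hypothesis $\|\zeta\|_2 \geq \|\gamma\|_2$. Your completion-of-the-square step is just a slightly tidier bookkeeping of the same expansion the paper performs (and your remark that full orthonormality, not mere orthogonality, is needed is a small point the paper glosses over).
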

\begin{proof}
Plugging the definitions for $c_1$ and $c_2$ into the inequality and expanding we have
$$(a^2 - 2ab + b^2) + (c^2 - 2 c d + d^2) \geq -2 (a b - b^2 + c d - d^2).$$
Noting that $a^2 + c^2 = \|\zeta\|_2^2$ and $b^2 + d^2 = \|\gamma\|_2^2$, as $w_1 \perp w_2$, this reduces to 
$$\|\zeta\|_2^2 + \|\gamma\|_2^2 \geq  2 \|\gamma\|_2^2,$$
which is true as we assume $\|\zeta\|_2 \geq \|\gamma\|_2$. 
\end{proof}

\subsection{Proof of Lemma 2}
Because of space constraints, we only show this lemma when the control input is scalar (i.e., $u \in \mathbb{R}$). The proof for the vector case is considerably more involved and not needed for the results in this paper. 
\begin{proof}[Proof of Lemma~\ref{lemma:existance}]
First we note that $x^* = A^{-1} B u^*$ such that $N^\top x^* = \lambda x^* \iff u^* B^\top A^{-\top} (A - B K) = \lambda u^* B^\top A^{-\top}$. Assuming $u^* \ne 0$ and dividing it out we get 
$$
B^\top A^{-\top} (A - B K) = \lambda B^\top A^{-\top}.
$$
Solving for $K$, noting that $B^\top A^{-\top} B$ is a scalar, we get
$$
K = \frac{1}{B^\top A^{-\top} B} \left( B^\top A^{-\top} (A - \lambda I) \right).
$$
Plugging this into our expression for $N$ we have
$$
A - B K = A - \frac{1}{B^\top A^{-\top} B} \left( B B^\top A^{-\top} (A - \lambda I) \right),
$$
which can be substituted into the inequality $A - B K + (A - B K)^\top \prec \lambda I$ to get
$$
M_1 - \lambda M_2 \prec 0,
$$
where 
\begin{align*}
M_1 =& A + A^\top \\
&- \frac{1}{B^\top A^{-\top} B} \left( B B^\top A^{-\top} A + A^\top A^{-1} B B^\top \right), \\
M_2 =& I - \frac{1}{B^\top A^{-\top} B} \left( B B^\top A^{-\top} + A^{-1} B B^\top \right).
\end{align*}
We will show the following such that we can always choose $\lambda < 0$, sufficiently small, to have $M_1 - \lambda M_2 \prec 0$: $M_1 \preceq 0$, with a single zero eigenvalue associated with the eigenvector $z$ (i.e. $M_1 z = 0$), and $z^\top M_2 z < 0$.

First, we show that $M_1 \preceq 0$, using the fact that if $X \prec 0$, then $P^\top X P \preceq 0 \hquad\forall\hquad P$. To construct this form consider $P = I - \frac{A^{-1} B B^\top A^{-\top} A}{B^\top A^{-1} B}$. By direct computation, $M_1 = P^\top (A + A^\top) P$.  This implies that $M_1 \preceq 0$ as $(A + A^\top) \prec 0$.

Now, we can show that $M_1$ has only one zero eigenvalue using this same $P^\top X P$ form. If $v$ is a zero eigenvector of $M_1$ then we have that $v^\top P^\top (A + A^\top) P v = 0$. Since we assume $A + A^\top \prec 0$, then this implies that $P v = 0$. Note that $P$ is the sum of a rank $n$ matrix, $I$, and a rank 1 matrix, which implies that $P$ is at least rank $n - 1$ and can have at most one zero eigenvalue. We see that this zero eigenvalue is associated with the eigenvector $z = A^{-1} B$ by direct computation; $M_1 z = 0$ using the equality $B^\top A^{-1} B = B^\top A^{-\top} B$, which holds as $B^\top A^{-1} B$ is a scalar. 

Lastly, we can show that $z^\top M_2 z < 0$ as
\begin{align*}
z^\top M_2 z =& (A^{-1} B)^\top M_2 (A^{-1} B) \\
=& B^\top A^{-\top} A^{-1} B \\
&- \frac{1}{B^\top A^{-1} B} \big( (B^\top A^{-\top} B) B^\top A^{-\top} A^{-1} B \\
& \hspace{5.45em} + B^\top A^{-\top} A^{-1} B (B^\top A^{-1} B) \big) \\
\stackrel{(a)}{=}& -B^\top A^{-\top} A^{-1} B \\
=& - \| A^{-1} B \|_2^2 < 0
\end{align*}
where $(a)$ follows from $B^\top A^{-1} B = B^\top A^{-\top} B$.
\end{proof}


\section{Simulation Results}
\label{section:results}
To test this approach we simulate the response of the inverter system for a range of initial conditions and power references. We test the system with an LQR controller, a safe linear $K$ controller, and the LQR controller with CBF safety filter. The parameters used for the system in these simulations can be found in Table~\ref{table:parameters}. The code used to generate these results is available at \url{https://github.com/TragerJoswig-Jones/Safe-Current-Magnitude-Limit-Inverter-Control}.

\begin{table}
\caption{Inverter \& \emph{RL} filter system parameters.}
\begin{center}
\begin{tabular}{c|c}
Parameter & Value \\
\hline
$V_\mathrm{nom}$    & 120~\unit{\volt} \\ 
\hline
$S_\mathrm{nom}$    & 1.5~\unit{\kilo{\volt\ampere}} \\
\hline
$I_\mathrm{nom}$    & 4.17~\unit{\ampere} \\
\hline
$\left| I 
\right|_\mathrm{max}$    & 5~\unit{\ampere} \\
\hline
\end{tabular}
\hspace{1em}
\begin{tabular}{c|c}
Param. & Value \\
\hline
$E$   & 120~\unit{\volt} \\
\hline
$\omega_\mathrm{nom}$    & $2\pi60$~\unit{\radian\per\s} \\
\hline
$R$         & 1.3~\unit{\ohm} \\
\hline
$L$     & 3.5~\unit{\milli\henry} \\
\hline
\end{tabular}
\label{table:parameters}
\end{center}
\vspace*{-10pt}  
\end{table}

The nominal LQR controller is computed with $Q = I$, $R = V_\mathrm{nom} / (10 \cdot L)$ and is found to be
$
K_\mathrm{LQR} = \begin{bmatrix}
0.0009 & 0.0099
\end{bmatrix}.
$
To design a safe linear feedback controller we use the conditions of \textbf{Lemma}~\ref{lemma:safety} to formulate the optimization problem
\begin{equation}
\begin{aligned}
\label{eq:safe-linear-ctrl-problem}
\displaystyle
\min_{K, \lambda} &\quad \| K \|_2 \\
\textrm{s.t.} &\quad
(x^*)^\top (A - B K) = (x^*)^\top \lambda, \\
&\quad \lambda_\mathrm{max}((A - B K) + (A - B K)^\top) \leq \lambda, \\
&\quad (A - B K) + (A - B K)^\top \prec 0, \\
\end{aligned}
\end{equation}
where we choose to minimize the spectral norm of $K$ as the objective. Note that it is difficult to select an objective that designs a controller for performance with this form of static controller optimization problem. We also note that these constraints are convex in $K$ and $\lambda$. We solve (\ref{eq:safe-linear-ctrl-problem}) using \texttt{CVXPY} \cite{diamond2016cvxpy, agrawal2018rewriting} to get
$
K = \begin{bmatrix}
-0.0111 &  0.0111
\end{bmatrix}.
$
The CBF safety filter, (\ref{eq:safety-filter-problem}), is ran with the nominal control action, $u_\mathrm{nom} = u^* - K_\mathrm{LQR} (x - x^*)$, and $\alpha = 1000$, and is solved using Algorithm~\ref{alg:closed-form-qp}. Simulations are performed using the odeint solver from \texttt{SciPy} with a sampling time of $\Delta{t} = 10~\unit{\micro\second}$ and simulation length of $t_\mathrm{end} = 50~\unit{\milli\second}$. The costs for these tests are calculated as 
$$
\mathrm{cost}_i = 1000\cdot \sum_{t=0}^{t_\mathrm{end}} \Delta{t} \left( \tilde{x}_{i,t}^\top Q \tilde{x}_{i,t} + \tilde{u}_{i,t}^\top R \tilde{u}_{i,t} \right),
$$
where $\tilde{x}_{i,t}= x_{i,t} - x^*_{i,t}$ and $\tilde{u}_{i,t} = u_{i,t} - u^*_{i,t}$ are the state and input errors at time $t$ for test $i$. 

In the following we use the function $\mathrm{lin}(a, b, n) \coloneqq a + \frac{{n-1}}{{i-1}}(b-a)$ to define a list of evenly spaced numbers over a specified interval. 


\subsection{\texorpdfstring{$x_0 \in \partial\mathcal{S}$}{Initial conditions at boundary} tests}
\label{sec:x0indS}
We test the three controllers for a range of initial conditions from $\partial \mathcal{S}$, specifically 
$X_0 \coloneqq \{(\left| I 
\right|_\mathrm{max}\sin(\phi), \left| I 
\right|_\mathrm{max}\cos(\phi])) \hquad|\hquad \phi \in \mathrm{lin}(0, 2 \pi - \frac{2 \pi}{100}, 100) \}.$
We select a single $x^*$ reference value for these tests, that is $x^* = (3.56~\unit{\ampere},3.51~\unit{\ampere})$, by selecting a feasible $x^*$ value from (\ref{eq:feasible-reference}) with magnitude $|I|_\mathrm{max}$. 

The average cost for each controller can be found in Table~\ref{table:traj-costs-dS}.
The nominal LQR controller always performs optimally, but also always has a trajectory that is unsafe in these tests. The safe $K$ controller is always safe, but performs poorly for the given LQR costs. Lastly, the CBF safety filter controller has a similar, yet slightly worse, performance to the LQR controller, but the trajectory always remains safe due to the safety filter. A selected example trajectory can be seen in Fig~\ref{fig:sim_state_trajectory}. Notably, the CBF trajectory aligns with the LQR trajectory up until the safety constraint would be violated and the safety filter intervenes. The alteration of the control action that causes this divergence in trajectories can be seen in Fig~\ref{fig:sim_input_trajectory}.

\begin{table}
\caption{Average cost with $x^* = (3.56~\unit{\ampere},3.51~\unit{\ampere})$ and $x_0 \in X_0$.}
\label{table:traj-costs-dS}
\begin{center}
\begin{tabular}{r|c|c|c}
\textbf{controller} & $K$ & CBF & LQR \\
\hline 
\textbf{cost} & 82.22 & 59.16 & 58.57  \\
\hline
\end{tabular}
\end{center}
\vspace*{-12pt}  
\end{table}

\begin{figure}
    \centering
    \includegraphics[width=\columnwidth, trim={0 10 0 0},clip]{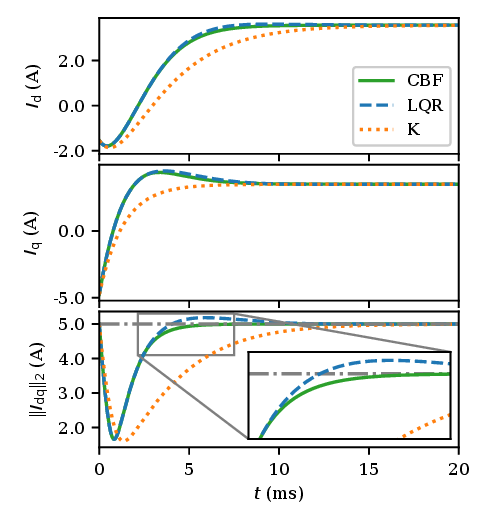}
    \caption{$I_\mathrm{dq}$ trajectories for CBF, LQR, and $K_\mathrm{safe}$ controllers with $x_0 = (-1.55~\unit{\ampere}, -4.76~\unit{\ampere}), x^* = (3.56~\unit{\ampere},3.51~\unit{\ampere})$. The LQR control is seen to exceed the safety bound, the safe $K$ control has costly performance, and the CBF control performs well and remains safe.}
    \label{fig:sim_state_trajectory}
    \vspace*{-5pt}  
\end{figure}

\begin{figure}[ht]
    \centering
    \includegraphics[width=\columnwidth, trim={0 10 0 0},clip]{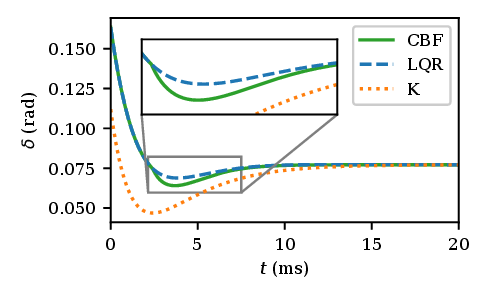}
    \caption{$\delta$ input values for CBF, LQR, and $K_\mathrm{safe}$ controllers with $x_0 = (-1.55~\unit{\ampere}, -4.76~\unit{\ampere}), x^* = (3.56~\unit{\ampere},3.51~\unit{\ampere})$.}
    \label{fig:sim_input_trajectory}
    \vspace*{-5pt}  
\end{figure}

\subsection{Sampled \texorpdfstring{$x_0, x^*$}{initial conditions and reference} tests}
We then test these controllers with randomly sampled $x_0$ and $x^*$ values. We sample $x^*$ reference values by selecting a feasible $x^*$ value from (\ref{eq:feasible-reference}) and scaling the magnitude to a uniformly sampled value from the range $[-\left| I 
\right|_\mathrm{max}, \left| I 
\right|_\mathrm{max}]$. $x_0$ values are calculated as $x_0 = (r_0 \cos{(\phi_0)}, r_0 \sin{(\phi_0)})$ with $r_0, \phi_0$ uniformly sampled from the ranges $[0, \left| I 
\right|_\mathrm{max}]$ and $[0, 2 \pi]$, respectively.
%
%
\begin{figure}[ht]
    \centering
    \includegraphics[width=\columnwidth, trim={0 8 0 0},clip]{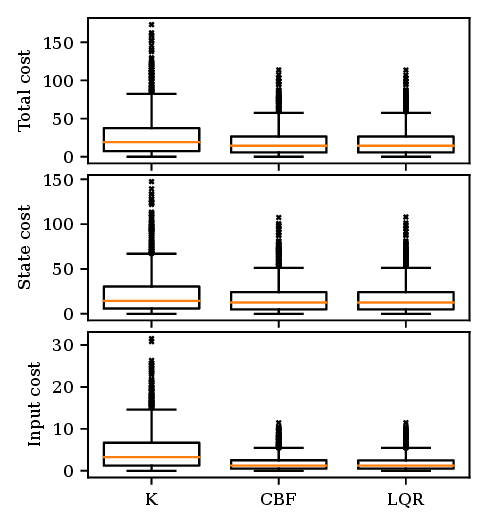}
    \caption{Costs of CBF, LQR, and $K_\mathrm{safe}$ controllers for 1,000 randomly sampled $x_0$ and $x^*$ values.}
    \vspace*{-5pt}  
    \label{fig:sampled_costs}
\end{figure}

Testing with 1,000 randomly sampled $x_0$ and $x^*$ values we see a similar trend in the average costs
with $K$ performing the worst, LQR control performing the best, and the CBF control performing slightly worse than the LQR controller. The distribution of costs from these tests can be seen in Fig~\ref{fig:sampled_costs}. Of these tests 24 of them had unsafe trajectories with LQR control, while all tests were safe with the other two controllers. The difference in the LQR and CBF controller costs are small, but for all tests the CBF controller cost is equal or slightly higher than the LQR controller cost due to the safety filter intervening.

\subsection{Small-angle assumption}
Lastly, we test the validity of the small-angle assumption with the CBF control. In this section we compare the linear system used for analysis to the nonlinear system without the small-angle assumption (\ref{eqn:nonlinear-system}). For the nonlinear system we use the CBF control formulated for the linear system. We subject the two systems to the same range of initial conditions as in Section~\ref{sec:x0indS}, but take $x^*$ to be (3.42~\unit{\ampere},3.64~\unit{\ampere}), a feasible reference value for the nonlinear system (\ref{eqn:nonlinear-system}).

An exemplary case can be seen in Fig~\ref{fig:nonlinear_sim_state_trajectory}. The trajectories of the two systems do not differ significantly for any of the test cases, demonstrating that the small-angle assumption holds well with the CBF control. However, the trajectories do differ and the nonlinear system slightly exceeds the current magnitude bound by up to 0.5\% in some of the tests. Further the nonlinear system is seen to not fully converge to the reference value due to the safety filter. Fig~\ref{fig:nonlinear_sim_input_trajectory} shows the input trajectories are nearly aligned, but when the safety filter activates at 2~\unit{\milli\second} it introduces an offset between the inputs which prevents the states of the nonlinear system from completely approaching the magnitude bound.

These issues seen when applying the linear CBF control to the nonlinear system are small, but may not be tolerable. The boundary violation issue can be mitigated by setting the boundaries to be slightly tighter than the actual current limit values. A solution to resolve both issues would be to formulate the CBF control for the nonlinear system. However, this would introduce nonlinearity into the constraints of the safety filter problem making it more computationally expensive to solve in real time. 

\begin{figure}[ht]
    \centering
    \includegraphics[width=\columnwidth, trim={0 10 0 0},clip]{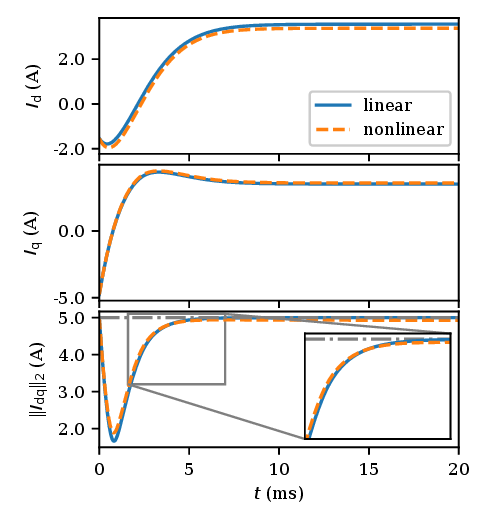}
    \caption{$I_\mathrm{dq}$ trajectories for for the linear and nonlinear systems with CBF control and $x_0 = (-1.55~\unit{\ampere}, -4.76~\unit{\ampere}), x^* = (3.42~\unit{\ampere},3.64~\unit{\ampere})$. The linear and nonlinear systems have a similar trajectories.}
    \label{fig:nonlinear_sim_state_trajectory}
    \vspace*{-5pt}  
\end{figure}

\begin{figure}[ht]
    \centering
    \includegraphics[width=\columnwidth, trim={0 10 0 0},clip]{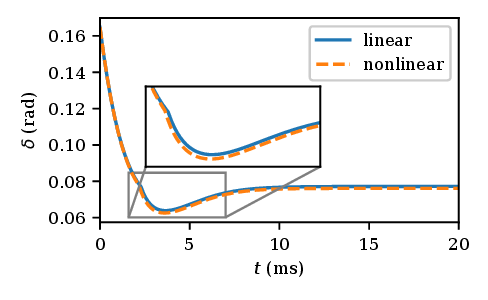}
    \caption{$\delta$ input values for the linear and nonlinear systems with CBF control and $x_0 = (-1.55~\unit{\ampere}, -4.76~\unit{\ampere}), x^* = (3.42~\unit{\ampere},3.64~\unit{\ampere})$.}
    \label{fig:nonlinear_sim_input_trajectory}
    \vspace*{-10pt}  
\end{figure}

\section{Conclusions}
\label{section:conclusion}
In this work we demonstrate a method for current limiting of inverter-based resources controlled as a voltage sources that can safely limit the current while minimally altering the nominal control action. 
This safety filter approach is proven to always be feasible with the dynamics of a \emph{RL} branch and can nearly maintain the performance of an unsafe nominal controller. 
Future work includes considering the impact of disturbances and parameter uncertainty, extending these results to include voltage magnitude control, modifying this method for integral control, and studying the stability of networks of inverters using this approach. 




\section*{Acknowledgement}
The authors were partially supported by NSF grant ECCS-2023531 and the State of Washington through the University of Washington Clean Energy Institute.






\printbibliography

\end{document}